\definecolor{myedit}{rgb}{1.0,0,0}
\definecolor{mytodo}{rgb}{0,1.0,1.0}
\definecolor{lightgray}{rgb}{0.8, 0.8, 0.8}
\newcommand{\todo}[1]{{\color{mytodo} #1}} 
\newtheorem{prop}{Proposition}
\newcommand{\M}[1]{\underset{#1}{\text{minimize}}\;}
\newcommand{\mobility}[0]{$\tilde{\mathcal{T}}_{episode}$}
\newcommand{\conservatism}[0]
                    {$\Delta\tau $ (m)      
                    & $\bar d_{min}$ (m)
                    & $\mathcal{F}$ (\%)}
\newcommand{\comfort}[0]{$\tilde{\mathcal{A}}_{lat}$ 
                        & $\bar{\mathcal{J}}_{long}$ ($\frac{m}{s^3}$)
                        & $\bar{\mathcal{J}}_{lat}$ ($\frac{m}{s^3}$)
                        }
\newcommand{\efficiency}[0]{$\bar{\mathcal{T}}_{solve}$ (ms)}
\title{\LARGE \bf 
Stochastic MPC with Multi-modal Predictions for Traffic Intersections
}
\author{Siddharth H. Nair$^{\star}$, Vijay Govindarajan$^{\star}$, Theresa Lin, Chris Meissen, H. Eric Tseng, Francesco Borrelli
\thanks{$^\star{}$Indicates equal contribution.
SHN, VG and FB are with the Model Predictive Control Laboratory, UC Berkeley. 
TL and CM are with Ford AV LLC.
HET is with Ford Research and Advanced Engineering.
\{siddharth\_nair, govvijay, fborrelli\}@berkeley.edu and \{tlin33, cmeissen, htseng\}@ford.com.}
}
\begin{document}

\maketitle
\thispagestyle{empty}
\pagestyle{empty}

\begin{abstract}
We propose a Stochastic MPC (SMPC) formulation for autonomous driving at traffic intersections which incorporates multi-modal predictions of surrounding vehicles given by Gaussian Mixture Models (GMM) for collision avoidance constraints. 
Our main theoretical contribution is a SMPC formulation that optimizes over a novel feedback policy class designed to exploit additional structure in the GMM predictions, and that is amenable to convex programming. The use of feedback policies for prediction is motivated by the need for reduced conservatism in handling multi-modal predictions of the surrounding vehicles, especially prevalent in traffic intersection scenarios. We evaluate our algorithm along axes of mobility, comfort, conservatism and computational efficiency at a simulated intersection in CARLA.
Our simulations use a kinematic bicycle model and multimodal predictions  trained on  a subset of the Lyft Level 5 prediction dataset.
To demonstrate the impact of optimizing over feedback policies, we compare our algorithm with two SMPC baselines that handle multi-modal collision avoidance chance constraints by optimizing over open-loop sequences. 
\end{abstract}

\section{Introduction}
\label{sec:introduction}
\subsection*{Motivation}
Autonomous vehicle technologies have seen a surge in popularity over the last decade, with the potential to improve flow of traffic, safety and fuel efficiency \cite{nhtsa}. While existing technology is being gradually introduced into scenarios such as highway driving and low-speed parking, the traffic intersection scenario presents a complex challenge, especially in the absence of V2V communication.  In any traffic scenario, an autonomous agent must plan to follow its desired route while accounting for surrounding agents to maintain safety. The difficulty arises from the variability in the possible behaviors of the surrounding agents \cite{nhtsa_intersection}, \cite{wei2021autonomous}. To address this difficulty, significant research has been devoted to modelling these agent predictions as multi-modal distributions \cite{imm_bar_shalom_1988, multipath_2019, trajectron_2020}. Such models capture uncertainty in both high-level decisions (desired route) and low-level executions (agent position, heading, speed).

The focus of this work is to incorporate these multi-modal distributions for the surrounding agents (target vehicles) into a planning framework for the autonomous agent (ego vehicle). The main challenge in designing such a framework is to find a good balance between performance, safety, and computation time.  We investigate this planning problem in the context of constrained optimal control and use Model Predictive Control (MPC), the state-of-the-art for real-time optimal control \cite{morari1999model, benblog}. We assume that predictions of the other agents are specified as Gaussian Mixture Models (GMMs) and propose a Stochastic Model Predictive Control (SMPC) framework that incorporates probabilistic collision avoidance and actuation constraints. 

\subsection*{Related work}
There is a large body of work focusing on the application of SMPC to autonomous driving \cite{planning_and_dm_for_avs_2018, rosolia2018data}. A typical SMPC algorithm involves solving a chance-constrained finite horizon optimal control problem in a receding horizon fashion \cite{mesbah2016stochastic}. The chance-constrained SMPC optimization problem, 
offers a less conservative modelling framework for handling constraints in uncertain environments compared to Robust MPC frameworks. In the context of autonomous driving, SMPC has been used for imposing chance constraints accounting for uncertainty in vehicles' predictions in applications such as autonomous lane change \cite{carvalho2014stochastic, gray2013stochastic}, cruise control \cite{moser2017flexible} and platooning \cite{causevic2020information}. A common feature of works in these applications is that the predictions are either uni-modal or the underlying mode of the multi-modal prediction can be inferred accurately. 

In order to handle multi-modal predictions (specifically GMMs), the work in \cite{zhou2018joint} and \cite{wang2020non} proposes nonlinear SMPC algorithms that suitably reformulate the collision avoidance chance constraint for all possible modes and demonstrate their algorithms at traffic intersection scenarios. However, a non-convex optimization problem is formulated to find a single open-loop input sequence that satisfies the collision avoidance chance constraints for all modes and possible evolutions of the target vehicles over the prediction horizon given by the GMM. This approach can be conservative and a feasible solution to the optimization problem may not exist. To remedy this issue, we optimize over a class of feedback policies \cite{goulart2006optimization} which adds flexibility due to the ability to react to different realizations of the vehicles' trajectory predictions along the prediction horizon. 

The work in~\cite{schildbach2016collision} and \cite{batkovic2020robust} is the closest to our approach. The former considers uni-modal predictions and fixes a policy (computed offline) for which the optimization problem can still be conservative. The latter proposes a Robust MPC scheme that optimizes over mode-dependent sequences for multi-modal distributions of the obstacle with bounded polytopic supports. However, the sequences are not a function of vehicles' state trajectory predictions because this would require enumerating over all possible sequences of vertices of the support of the obstacle distribution, resulting in a formulation where the number of constraints is exponential in the prediction horizon.

\subsection*{Contributions}
Our main contributions are summarised as follows:
\begin{itemize}
    \item We propose a SMPC framework that optimizes over a novel class of feedback policies designed to exploit additional structure in the GMM predictions. These policies assume feedback over the ego vehicle state and the target vehicle positions, thus reducing conservatism. Moreover, we show that our SMPC optimization problem can be posed as a Second-Order Cone Program (SOCP).
    
    \item We present a systematic evaluation of our framework along axes of (i) mobility, (ii) comfort, (iii) conservatism and (iv) computational efficiency at a simulated traffic intersection. To demonstrate the impact of optimizing over feedback policies, we include two baselines: (i) a chance-constrained Nonlinear MPC optimizing over open-loop sequences with the multi-modal collision avoidance chance-constraints as in \cite{wang2020non} and (ii) an ablation of our SMPC formulation, which optimizes over open-loop sequences instead of feedback policies.
\end{itemize}
In the interest of space, we defer proofs and extensive simulation results to our website\footnote[1]{Supplementary material, videos, code @ \todo{\href{https://sites.google.com/view/siddharthnair/research/ICRA2022}{Project Website} }\label{website}}.

\section{Problem Formulation}
\label{sec:prblm_f}
In this section we formally cast the problem of designing SMPC in the context of autonomous driving at intersections.

\subsection{Preliminaries}

\subsubsection*{Notation} 
The index set $\{k_1,k_1+1,\dots, k_2\}$ is denoted by $\mathcal{I}_{k_1}^{k_2}$. For any two positive semi-definite matrices $M_1, M_2\in\mathbb{S}^n_+$, we have $M_1\prec M_2\Leftrightarrow M_2-M_1\in\mathbb{S}^n_+$. We denote $\Vert\cdot\Vert$ by the Euclidean norm and $\Vert x\Vert_{M}=\Vert \sqrt{M}x\Vert$ for some $M\succ 0$. 
Given random variables $X$ and $Y$, we write $X|Y$ as a shorthand for $X|Y=y$ ($X$ conditioned on $Y=y$).  
Given a normal distribution $\mathcal{N}(\mu,\Sigma)$ and $\beta\geq 0$, define the $\beta-$confidence ellipsoid $\mathcal{E}(\mu, \Sigma, \beta)=\{x: \Vert x-\mu\Vert^2_{\Sigma^{-1}}\leq\beta\}$.  
The binary operator $\otimes$ denotes the Kronecker product.   
The partial derivative of function $f(x,u)$ with respect to $x$ at $(x,u)=(x_0,u_0)$ is denoted by $\partial_x f(x_0,u_0)$. 

\begin{figure}[h]
    \centering
    \includegraphics[width=0.35\columnwidth]{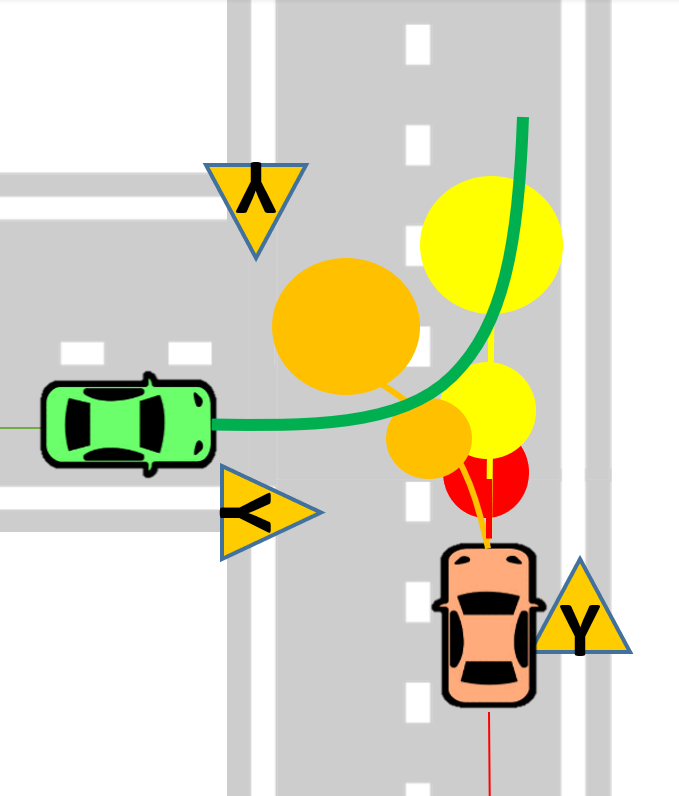}
        \caption{\small{Depiction of an unsignalized intersection with green Ego Vehicle (EV) and orange Target Vehicle (TV). The EV arrives first and intends to turn left (green path). The EV is provided 3 possible predictions of TV behavior in decreasing order of probability (red: give EV right-of-way, orange: turn left and yellow: go straight).}}
    \label{fig:intersection}    
\end{figure}
\subsubsection*{Intersection setup}
Consider the intersection shown in Figure~\ref{fig:intersection}.  The Ego Vehicle (EV), depicted in green, is the vehicle to be controlled.  All other vehicles at the intersection are called Target Vehicles (TVs).

\subsubsection*{EV modeling}
Let $x_t=[X_t\ Y_t\ \psi_t\ v_t]^\top$ be the state of the EV at time $t$ with $(X_t,Y_t)$ and $\psi_t$ being its position and heading respectively in global coordinates, and $v_t$ being its speed. The discrete-time dynamics of the EV are given by the Euler discretization of the kinematic bicycle model~\cite{kong2015kinematic}, denoted as $x_{t+1}=f^{EV}(x_t,u_t)$.
The control inputs $u_t=[a_t\ \delta_t]^\top$ are the acceleration $a_t$ and front steering angle $\delta_t$. The system state and input constraints are given by polytopic sets $\mathcal{X}, \mathcal{U}$ respectively.

As depicted by the green path in Figure~\ref{fig:intersection}, we assume a kinematically feasible reference trajectory, \small\begin{align}\label{eq:EV_ref_traj}
    \{(x^{ref}_t,u^{ref}_{t})\}_{t=0}^{T}
\end{align}\normalsize is provided for the EV. This serves as the EV's desired trajectory which can be computed offline (or online at low frequency) by solving a nonlinear trajectory optimization problem, while accounting for the EV's route, actuation limits, and static environment constraints (like lane boundaries, traffic rules).  However, this reference does not consider the dynamically evolving TVs for real-time obstacle avoidance.

\subsubsection*{TV predictions}
We focus our presentation on a single TV although the proposed framework can be generalised for multiple TVs. Let $o_t=[X^o_t\ Y^o_t]^\top$ be the position (and state) of the TV at time $t$. For the purpose of dynamic obstacle avoidance, we assume that we are provided predictions of the TV's position for $N$ time steps in the future, given by the random variables $\{o_{k|t}\}_{k=1}^{N}$, where each random variable $o_{k|t}$ is shorthand for $o_{t+k}|o_t$, the position of the TV at time $t+k$ conditioned on the current position $o_t$.  
The predictions for the TV at time $t$ is encoded as a Gaussian Mixture Model (GMM):\small
\begin{equation}\label{eq:GMM}
    \mathcal{G}_t= \left\{ p_{t,j}, \{\mathcal{N}(\mu_{k|t,j}, \Sigma_{k|t,j})\}_{k=1}^{N} \right\}_{j=1}^J
\end{equation}\normalsize
where
\begin{itemize}
    \item $J$ is the number of possible modes.
    \item $\mathbb{P}(\sigma_t=j|o_t)=p_{t,j} \in [0, 1]$ is the probability that the mode at time $t$, denoted by $\sigma_t$, is $j$ for some $j\in\mathcal{I}_1^J$.
    \item $o_{k|t,j}\sim\mathcal{N}(\mu_{k|t,j}, \Sigma_{k|t,j})$, i.e., the position of the TV at time $t+k$ conditioned on the current position $o_t$ and mode $\sigma_t=j$, denoted by $o_{k|t,j}$, is given by the Gaussian distribution $\mathcal{N}(\mu_{k|t,j}, \Sigma_{k|t,j})$.
\end{itemize}
The multi-modal distribution of $o_{k|t}$ is thus given by \small\begin{align}\label{eq:mm_dist_GMM}o_{k|t}\sim\sum_{j=1}^Jp_{t,j}\mathcal{N}(\mu_{k|t,j},\Sigma_{k|t,j})~~\forall k\in\mathcal{I}_1^{N}.\end{align}\normalsize


 \subsection{Qualitative Control Objectives}
 \begin{itemize}
     \item Design a feedback control $u_t=\pi_t(x_t, o_t)$ for the EV to track the reference trajectory \eqref{eq:EV_ref_traj} while avoiding collisions with the TV, whose predictions are given by the GMM \eqref{eq:GMM}.
     \item The algorithm to compute $u_t$ must be computationally tractable and not overly conservative.
 \end{itemize}
 
 \subsection{Overview of Proposed Approach}
 \label{sec:smpc_approach}
 We propose a novel SMPC formulation to compute the feedback control $u_t$. The optimization problem of our SMPC takes the form,

 \begin{mini!}[2]
 {\substack{\theta_t}}{J_t(\mathbf{x}_t,\mathbf{u}_t)\label{opt:obj}}{\label{opt:SMPC_skeleton}}{}
\addConstraint{x_{k+1|t}}{=f^{EV}_k(x_{k|t},u_{k|t})\label{opt:EV_dyn}}
\addConstraint{o_{k+1|t}\vert o_{k|t}}{\sim f^{TV}_{k}(o_{k|t})\label{opt:TV_dyn}}
\addConstraint{\mathbb{P}(g_k(o_{k+1|t},x_{k+1|t})\leq0)}{\leq \epsilon\label{opt:oa_constr}}{}
\addConstraint{(x_{k+1|t},u_{k|t})\in\mathcal{X}\times\mathcal{U}}{\label{opt:ev_constr}}{}
\addConstraint{\mathbf{u}_t}{\in\Pi_{\theta_t}(\mathbf{x}_t,\mathbf{o}_t)\label{opt:gen_pol_class}}
\addConstraint{x_{0|t}=x_t,\ o_{0|t}=o_t }{\label{opt:init}}{}
\addConstraint{~ \forall k\in\mathcal{I}_0^{N-1}}{\nonumber}
\end{mini!}
where $\mathbf{u}_t=[u_{0|t},\dots, u_{N-1|t}]$, $\mathbf{x}_t=[x_{0|t},\dots, x_{N|t}]$ and $\mathbf{o}_t=[o_{0|t},\dots, o_{N|t}]$. 
The SMPC feedback control action is given by the optimal solution of \eqref{opt:SMPC_skeleton} as 
\begin{align}\label{eq:SMPC}u_t=\pi_{\mathrm{SMPC}}(x_t,o_t)=u^\star_{0|t}\end{align} where the EV and TV state feedback enters as \eqref{opt:init}.  

The objective \eqref{opt:obj} penalizes deviation of the EV trajectory from the reference \eqref{eq:EV_ref_traj} and the collision avoidance constraints are imposed as chance constraints \eqref{opt:oa_constr} along with polytopic state and input constraints $\mathcal{X},\mathcal{U}$ for the EV. The TV predictions in \eqref{opt:TV_dyn} are obtained from the GMM \eqref{eq:GMM} and assuming additional dynamical structure (discussed in Sec.~\ref{sec:smpc_tv_pred_model}). Using this, our SMPC optimizes over a novel parameterized policy class \eqref{opt:gen_pol_class} that depends on predictions of the EV's and TV's states as opposed to optimization over open-loop sequences which can often be conservative. This is our main theoretical contribution. Moreover, we show that \eqref{opt:SMPC_skeleton} can be posed as a convex optimization problem, making it computationally tractable and amenable to state-of-the-art solvers. This is achieved by (i) a convex parameterisation of policy class $\Pi_{\theta_t}(\cdot)$, (ii) using affine time varying models to generate EV and TV state predictions in \eqref{opt:EV_dyn} and \eqref{opt:TV_dyn}, (iii) using a convex cost function for \eqref{opt:obj}, and (iv) constructing convex inner approximations of \eqref{opt:oa_constr}. 
 

\section{Stochastic MPC with Multi-Modal Predictions}
\label{sec:SMPC}

In this section, we detail our SMPC formulation for the EV to track the reference \eqref{eq:EV_ref_traj} while incorporating multi-modal predictions from \eqref{eq:GMM} of the TV for obstacle avoidance. 

\subsection{EV prediction model}
The EV prediction model \eqref{opt:EV_dyn} is a linear time-varying model, obtained by linearizing $f^{EV}(\cdot)$ about the reference trajectory \eqref{eq:EV_ref_traj}. Defining $x_{k|t}-x^{ref}_{t+k}=\Delta x_{k|t}$ and $u_{k|t}-u^{ref}_{t+k}=\Delta u_{k|t}$, we have $\forall k\in\mathcal{I}_0^{N-1}$,
\small
\begin{align}\label{eq:EV_ltv_model}
    &\Delta x_{k+1|t}=A_{k|t}\Delta x_{k|t}+B_{k|t}\Delta u_{k|t}+w_{k|t}\\
    &A_{k|t}=\partial_xf^{EV}(x^{ref}_{t+k},u^{ref}_{t+k}),\ B_{k|t}=\partial_u f^{EV}(x^{ref}_{t+k},u^{ref}_{t+k})\nonumber
\end{align}
\normalsize
where the additive process noise $w_{k|t}\sim\mathcal{N}(0,\Sigma_w)$ (i.i.d with respect to $k$) models linearization error and other stochastic noise sources. Consequently, the polytopic state and input constraints \eqref{opt:ev_constr} are then written as chance constraints $\forall k\in\mathcal{I}_0^{N-1}$,
\small
\begin{align}\label{eq:EV_cc}
    &\mathbb{P}((\Delta x_{k+1|t},\Delta u_{k|t})\in\Delta\mathcal{X}_k\times\Delta\mathcal{U}_k )\geq 1-\epsilon\\
   &\Delta\mathcal{X}_k=\{\Delta x:F^x_k\Delta x\leq f^x_k \}, \Delta\mathcal{U}_k=\{\Delta u:F^u_k\Delta u\leq f^u_k \}. \nonumber 
\end{align}
\normalsize
\subsection{TV prediction model}\label{sec:smpc_tv_pred_model}
Notice that the multi-modal distribution of the random variable $o_{k+1|t}$ (as given by \eqref{eq:mm_dist_GMM}) is independent of the random variable $o_{k|t}$, i.e., $o_{k+1|t}\vert o_{k|t}=o_{k+1|t}$.
 However, since the TV positions are governed by the laws of motion, it is reasonable to assume dynamical structure to link positions at consecutive time steps such that $o_{k+1|t}\vert o_{k|t}\neq o_{k+1|t}$. Thus, a measurement of  $o_{k|t}$ reduces the uncertainty in $o_{k+1|t}$, which can be exploited by the policy class \eqref{opt:gen_pol_class} to satisfy \eqref{opt:oa_constr}. We construct a multi-modal distribution to model this assumed dynamical structure using the parameters of the GMM \eqref{eq:GMM} as follows. 
 
 We model dynamical behavior by assuming $~\forall k\in\mathcal{I}_1^{N-1}, \forall j\in\mathcal{I}_1^J$ that  if $o_{k|t,j}$ is $S$ standard deviations away from  $\mu_{k|t,j}$, then $o_{k+1|t,j}$ will also be approx. $S$ standard deviations away from  $\mu_{k+1|t,j}$. For safety, we would also like to ensure that the $\beta-$confidence sets for the distribution of $o_{k|t,j}$ given by the original GMM \eqref{eq:GMM} is contained within that of the new distribution.
 
 Consider the following candidate  for the conditional distribution $~\forall k\in\mathcal{I}_0^{N-1}, \forall j\in\mathcal{I}_1^J$:
\small
\begin{align}\label{eq:tv_dist_j}
    o_{k+1|t,j}|o_{k|t,j}\sim\mathcal{N}(T_{k|t,j}o_{k|t,j}+c_{k|t,j},\Tilde{\Sigma}_{k+1|t,j}),
\end{align}
\normalsize
where $\forall k\in\mathcal{I}_1^{N-1}$,\small$  T_{k|t,j}=\sqrt{\Sigma_{k+1|t,j}}\sqrt{\Sigma_{k|t,j}^{-1}}$, $c_{k|t,j}=\mu_{k+1|t,j}-T_{k|t,j}\mu_{k|t,j}$, $\Tilde{\Sigma}_{k+1|t,j}= \Sigma_n$ \normalsize for some \small$\Sigma_n\succ 0$, $\Sigma_n\prec\mathrel{\mkern-5mu}\prec \Sigma_{k+1|t,j}$\normalsize
and \small$T_{0|t,j}=I,\ c_{0|t,j}=\mu_{1|t,j}-o_t,\ \tilde{\Sigma}_{1|t,j}=\Sigma_{1|t,j}$\normalsize.
In the next proposition, we compute the distribution of $o_{k|t,j}$ using \eqref{eq:tv_dist_j} and verify our modelling assumptions.
\begin{prop}\label{prop:1} Given the conditional distribution \eqref{eq:tv_dist_j}, we have $\forall j\in\mathcal{I}_1^J$, $\forall k\in\mathcal{I}_1^{N}$: 
\begin{itemize}
    \item  if \small$\Vert o_{k|t,j}-\mu_{k|t,j}\Vert^2_{\Sigma^{-1}_{k|t,j}}=S^2$\normalsize\ then \small$\mathbb{E}[\Vert o_{k+1|t,j}-\mu_{k+1|t,j}\Vert^2_{\Sigma^{-1}_{k+1|t,j}}|\ o_{k|t,j}]=S^2+\text{tr}(\Sigma^{-1}_{k+1|t,j}\Sigma_n)$ \normalsize .
    \item 
\small $o_{k|t,j}\sim\mathcal{N}(\mu_{k|t,j},\Sigma_{k|t,j}+\Sigma_n+\bar{\Sigma}_{k|t,j})$\normalsize\
where \footnotesize
$\bar{\Sigma}_{k|t,j}=\sum\limits_{m=1}^{k-2}\sqrt{\Sigma_{k-1|t,j}\Sigma_{m|t,j}^{-1}}\Sigma_{n}\sqrt{\Sigma_{k-1|t,j}\Sigma_{m|t,j}^{-1}}^\top$\normalsize. So the $\beta-$confidence ellipsoid of the new distribution contains that of the distribution given by the GMM in \eqref{eq:mm_dist_GMM}, i.e., \small$\mathcal{E}(\mu_{k|t,j},\Sigma_{k|t,j}+\Sigma_n+\bar{\Sigma}_{k|t,j}, \beta)\supset \mathcal{E}(\mu_{k|t,j},\Sigma_{k|t,j}, \beta)$\normalsize.
\end{itemize}
\end{prop}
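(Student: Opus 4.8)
The plan is to treat \eqref{eq:tv_dist_j} as a linear--Gaussian recursion and reduce both bullets to matrix algebra with the symmetric square roots $\sqrt{\Sigma_{k|t,j}}$. Dropping the $|t,j$ subscripts for brevity, the preliminary observation is that substituting $c_k=\mu_{k+1}-T_k\mu_k$ into \eqref{eq:tv_dist_j} yields, for $k\in\mathcal{I}_1^{N-1}$, the centered form $o_{k+1}-\mu_{k+1}=T_k(o_k-\mu_k)+\nu_k$ with $\nu_k\sim\mathcal{N}(0,\Sigma_n)$ independent of $o_k$. I would then record the two algebraic identities that the choice $T_k=\sqrt{\Sigma_{k+1}}\sqrt{\Sigma_k^{-1}}$ is engineered to produce: the whitening identity $T_k\Sigma_kT_k^\top=\Sigma_{k+1}$ and its dual $T_k^\top\Sigma_{k+1}^{-1}T_k=\Sigma_k^{-1}$, both following from the cancellation $\sqrt{\Sigma_k^{-1}}\sqrt{\Sigma_k}=I$ (same-index symmetric square roots commute).

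For the first bullet I would expand $\mathbb{E}[\Vert o_{k+1}-\mu_{k+1}\Vert^2_{\Sigma_{k+1}^{-1}}\mid o_{k}]$ using the centered form. The cross term vanishes since $\nu_k$ is zero-mean and independent of $o_k$; the deterministic term $(o_k-\mu_k)^\top T_k^\top\Sigma_{k+1}^{-1}T_k(o_k-\mu_k)$ collapses to $\Vert o_k-\mu_k\Vert^2_{\Sigma_k^{-1}}=S^2$ by the dual identity and the hypothesis; and the noise term is $\mathbb{E}[\nu_k^\top\Sigma_{k+1}^{-1}\nu_k]=\text{tr}(\Sigma_{k+1}^{-1}\Sigma_n)$. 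Adding these gives exactly $S^2+\text{tr}(\Sigma_{k+1}^{-1}\Sigma_n)$.

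For the second bullet I would induct on $k$. Gaussianity and the mean are immediate: the base case $o_1\sim\mathcal{N}(\mu_1,\Sigma_1)$ comes from the $k=0$ conditional with $o_0=o_t$ deterministic, affine images of Gaussians are Gaussian, and $c_k$ is chosen so that $\mathbb{E}[o_{k+1}]=\mu_{k+1}$ whenever $\mathbb{E}[o_k]=\mu_k$. The covariance then obeys the Lyapunov-type recursion $V_{k+1}=T_kV_kT_k^\top+\Sigma_n$ with $V_1=\Sigma_1$. To obtain the closed form I would unroll this recursion, using the whitening identity to map the base covariance $\Sigma_1$ onto $\Sigma_k$ and using the telescoping product $T_{k-1}T_{k-2}\cdots T_l=\sqrt{\Sigma_k}\sqrt{\Sigma_l^{-1}}$ (again by cancellation of adjacent same-index roots) to propagate each injected $\Sigma_n$, so that the accumulated noise terms assemble into $\Sigma_k+\Sigma_n+\bar\Sigma_k$. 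The ellipsoid containment is then a one-line corollary: $\Sigma_n\succ0$ and $\bar\Sigma_k$ is a sum of positive semidefinite terms, so $V_k\succeq\Sigma_k$, hence $V_k^{-1}\preceq\Sigma_k^{-1}$ and $\Vert x-\mu_k\Vert^2_{V_k^{-1}}\leq\Vert x-\mu_k\Vert^2_{\Sigma_k^{-1}}$ pointwise, which forces $\mathcal{E}(\mu_k,\Sigma_k,\beta)\subseteq\mathcal{E}(\mu_k,V_k,\beta)$.

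I expect the main obstacle to be the bookkeeping in the covariance recursion rather than any single deep step. Because the $\Sigma_k$ are general positive definite matrices, $\sqrt{\Sigma_i}$ and $\sqrt{\Sigma_j^{-1}}$ do not commute for $i\neq j$, so the telescoping of the matrix products and the precise matching of the propagated noise covariances to the stated closed form for $\bar\Sigma_k$ must be carried out carefully, tracking exactly which indices cancel and confirming the summation range; this indexing match is the delicate part of the argument.
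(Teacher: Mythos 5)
Your proposal is correct and follows essentially the same route as the paper's proof: both rewrite \eqref{eq:tv_dist_j} as the centered linear--Gaussian recursion $o_{k+1|t,j}-\mu_{k+1|t,j}=T_{k|t,j}(o_{k|t,j}-\mu_{k|t,j})+n_{k|t,j}$, use the square-root cancellation identities built into $T_{k|t,j}$ to collapse the conditional second moment to $S^2+\text{tr}(\Sigma^{-1}_{k+1|t,j}\Sigma_n)$, unroll the recursion (your Lyapunov-recursion induction is just a repackaging of the paper's direct unrolling of $o_{k|t,j}$ in terms of $o_t$ and the injected noises, with the same telescoping products $\prod_l T_{l|t,j}$), and conclude the ellipsoid containment from the shared center and the Loewner ordering of the covariances. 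The indexing bookkeeping you flag as delicate is indeed the only nontrivial part, and the paper handles it by the same telescoping argument you describe.
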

\begin{proof}
Appendix~\ref{app:prop1proof} in full version \cite{nair2021stochastic}.
\end{proof}
 

Model \eqref{eq:tv_dist_j} describes inference over the state of the TV by conditioning the observations of the TV states on a given mode $\sigma_t=j$. To complete the description of the TV's multi-modal predictions, we now discuss inference over the mode $\sigma_t$ using observations $o_{k|t}$. There are sophisticated methods for computing the posterior distribution $\mathbb{P}(\sigma_t=j|o_{k|t})=p_{k|t,j}$ (e.g. using Bayes' rule or E-M) but these would complicate our SMPC problem \eqref{opt:SMPC_skeleton}. 
Instead, we assume $p_{k|t,j}=p_{t,j}$ $\forall k<\bar k$ and at some $\bar k$,  the mode can be exactly inferred from the TV's state. Motivated by \cite{batkovic2020robust}, our model for determining $\bar k$ is given by  
\small
\begin{align}\label{eq:k_bar}
    \min\big\{k\in\mathcal{I}_1^N:\ \mathcal{E}(\mu_{l|t,j_1},&\Sigma_{l|t,j_1},\beta)\cap\mathcal{E}(\mu_{l|t,j_2},\Sigma_{l|t,j_2},\beta)=\emptyset, \nonumber\\
    &\forall l\geq k, \forall j_1,j_2\in\mathcal{I}_1^J\big\}
\end{align}
\normalsize
Thus $\bar k$ is the minimum time step along the prediction horizon such that all $\beta-$confidence ellipsoids for subsequent time steps are pairwise disjoint for all modes. Given $\bar k$, we assume that the mode of the $i$th TV can be determined using state $o_{\bar k|t}$ as given by
\small
\begin{align}\label{eq:belief_update}
   p_{\bar k|t,j}=\begin{cases}
    1 & \text{ if } o_{\bar k|t}\in\mathcal{E}(\mu_{\bar{k}|t,j}, \Sigma_{\bar{k}|t,j}, \beta)\\ 
    0 & \text{ otherwise }
    \end{cases} 
\end{align}\normalsize
and $p_{k|t,j}=p_{\bar k|t,j}~\forall k\in\mathcal{I}_{\bar k}^{N-1}$. In summary, we use \eqref{eq:tv_dist_j} and \eqref{eq:belief_update} to obtain the TV prediction model $f^{TV}_k(\cdot)$ in \eqref{opt:TV_dyn} as the multi-modal distribution of $o_{k+1|t}$ conditioned on $o_{k|t}$,
\small
\begin{align}\label{eq:mm_dist_markov}
    o_{k+1|t}|o_{k|t}\sim\sum_{j=1}^Jp_{k|t,j}\mathcal{N}(T_{k|t,j}o_{k|t}+c_{k|t,j},\Tilde{\Sigma}_{k+1|t,j}).
\end{align}
\normalsize
Defining $n_{k|t,j}\sim\mathcal{N}(0,\tilde{\Sigma}_{k+1|t,j})~\forall k\in\mathcal{I}_0^{N-1}, \forall j\in\mathcal{I}_1^J$, we can use properties of Gaussian random variables (closure under linear combinations and the orthogonality principle) to rewrite \eqref{eq:mm_dist_markov} as
\small
\begin{align}\label{eq:TV_dyn_atv}
    o_{k+1|t}=T_{k|t,\sigma_t}o_{k|t}+c_{k|t,\sigma_t}+n_{k,\sigma_t} \ ~\forall k\in\mathcal{I}_0^{N-1}
\end{align}\normalsize
where the posterior distributions of the mode $\sigma_t$ are determined using
\small
\begin{align}\label{eq:mode_dist}
    \sigma_t|o_{k|t}=\begin{cases}\sigma_t & \forall k< \bar k\\
    \sigma_t|o_{\bar k|t} \text{ given by }\eqref{eq:belief_update} & \forall k\geq\bar k
    \end{cases}.
\end{align}\normalsize

\subsection{Multi-modal Collision Avoidance Constraints}
We assume that we are given or can infer the TV rotation matrices for each mode along the prediction horizon as 
\small$\{\{R_{k|t,j}\}_{k=1}^N\}_{j=1}^J.$\normalsize For collision avoidance between the EV and the TV, we impose the following chance constraint
\small
\begin{align}\label{eq:oa_chance_constraint}
    \mathbb{P}(g_{k|t}(P_{k|t},o_{k|t})\geq1\ )\geq 1-\epsilon~~~\forall k\in\mathcal{I}_1^N
\end{align}\normalsize
where $P_{k|t}=[\Delta X_{k|t}\ \Delta Y_{k|t}]^\top + P^{ref}_{k|t}$ with $P^{ref}_{k|t}=[X^{ref}_{t+k}\ Y^{ref}_{t+k}]^\top$ given by \eqref{eq:EV_ref_traj}, $[\Delta X_{k|t}\ \Delta Y_{k|t}]^\top$ given by \eqref{eq:EV_ltv_model}, and 
\small\begin{align}\label{eq:g_def}
 &g_{k|t}(P,o)=\Big\Vert\begin{bmatrix}\frac{1}{a_{ca}}&0\\0&\frac{1}{b_{ca}}\end{bmatrix} R_{k|t,\sigma_t}(P-o)\Big\Vert^2.
\end{align}\normalsize 
%
$a_{ca}=a_{TV}+d_{EV}, b_{ca}=b_{TV}+d_{EV}$ are semi-axes of the ellipse containing the TV's extent with a buffer of $d_{EV}$. $g_{k|t}(P,o)\geq1$ implies that the EV's extent (modelled as a disc of radius $d_{EV}$ and centre $P$) does not intersect the TV's extent (modelled as an ellipse with semi-axes $a_{TV}, b_{TV}$ and centre $o$, oriented as given by $R_{k|t,\sigma_t}$). 
This constraint is non-convex because it involves the integral of the nonlinear function $g_{k|t}(\cdot)$ over the joint distribution of $(P_{k|t}, o_{k|t}, \sigma_t)$. To address the multi-modality, we conservatively impose the chance constraint conditioned on every mode with the same risk level $1-\epsilon$ because of the following implication
\small
\begin{align*}&\mathbb{P}(g_{k|t,j}(P_{k|t},o_{k|t,j})\geq1 )\geq 1-\epsilon~~\forall j\in\mathcal{I}_1^J\\ \Rightarrow& \mathbb{P}(g_{k|t}(P_{k|t},o_{k|t})\geq1 )\geq 1-\epsilon
\end{align*}
\normalsize
where $g_{k|t,j}(P,o)$ is defined by conditioning \eqref{eq:g_def} on mode $j$. To address the nonlinearity, we use the convexity of $g_{k|t,j}(\cdot)$ to construct its linear under-approximation in the following proposition.

\begin{prop}\label{prop:2}
 For collision avoidance, define EV positions $\forall k\in\mathcal{I}_1^{N},  \forall j\in\mathcal{I}_1^J$ as \small $P^{ca}_{k|t,j}=\mu_{k|t,j}+\frac{1}{\sqrt{g_{k|t,j}(P^{ref}_{k|t},\mu_{k|t,j})}}(P^{ref}_{k|t}-\mu_{k|t,j})$\normalsize. Then for the affine function,
\small$g^L_{k|t,j}(P,o)=\partial_P g_{k|t}(P^{ca}_{k|t,j},\mu_{k|t,j})(P-P^{ca}_{k|t,j})+\partial_o g_{k|t}(P^{ca}_{k|t,j},\mu_{k|t,j})(o-\mu_{k|t,j})$\normalsize, we have $\forall k\in\mathcal{I}_0^{N-1}$
\small
\begin{align}\label{eq:oa_cc_lin}
&\mathbb{P}(g^L_{k+1|t,j}(P_{k+1|t},o_{k+1|t,j})\geq0\ )\geq 1-\epsilon~\forall j\in\mathcal{I}_1^J\\ \Rightarrow& \mathbb{P}(g_{k|t}(P_{k+1|t},o_{k+1|t})\geq1)\geq 1-\epsilon\nonumber
\end{align}
\normalsize
\end{prop}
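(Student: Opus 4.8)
The plan is to exploit the convexity of the quadratic $g_{k+1|t,j}$ together with the specific choice of linearization point $P^{ca}_{k+1|t,j}$, so that its first-order Taylor under-estimate takes the clean form $g_{k+1|t,j}(P,o)\geq 1+g^L_{k+1|t,j}(P,o)$; once this is in place, both implications collapse to elementary event inclusions. First I would note that $g_{k+1|t,j}(P,o)=\norm{M R_{k+1|t,j}(P-o)}^2$ with $M=\diag(1/a_{ca},1/b_{ca})$ is a positive-semidefinite quadratic form in the difference $P-o$, hence jointly convex in $(P,o)$ and positively homogeneous of degree two in $P-o$. Applying the first-order characterization of convexity at the point $(P^{ca}_{k+1|t,j},\mu_{k+1|t,j})$ yields the global tangent-plane bound
\[
g_{k+1|t,j}(P,o)\geq g_{k+1|t,j}(P^{ca}_{k+1|t,j},\mu_{k+1|t,j})+g^L_{k+1|t,j}(P,o),
\]
where the gradient terms are exactly those collected in the affine map $g^L_{k+1|t,j}$ of the statement; note that $\partial_P g_{k+1|t},\partial_o g_{k+1|t}$ evaluated at the mode-$j$ linearization point coincide with $\partial_P g_{k+1|t,j},\partial_o g_{k+1|t,j}$, since conditioning on $\sigma_t=j$ merely fixes $R_{k+1|t,\sigma_t}=R_{k+1|t,j}$.

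The next step is to verify that the constant term equals one. Because $P^{ca}_{k+1|t,j}-\mu_{k+1|t,j}$ is precisely $P^{ref}_{k+1|t}-\mu_{k+1|t,j}$ scaled by $1/\sqrt{g_{k+1|t,j}(P^{ref}_{k+1|t},\mu_{k+1|t,j})}$, the degree-two homogeneity gives
\[
g_{k+1|t,j}(P^{ca}_{k+1|t,j},\mu_{k+1|t,j})=\frac{1}{g_{k+1|t,j}(P^{ref}_{k+1|t},\mu_{k+1|t,j})}\,g_{k+1|t,j}(P^{ref}_{k+1|t},\mu_{k+1|t,j})=1.
\]
Substituting this into the tangent-plane bound produces the pointwise implication $g^L_{k+1|t,j}(P,o)\geq 0\Rightarrow g_{k+1|t,j}(P,o)\geq 1$, i.e.\ the event $\{g^L_{k+1|t,j}(P_{k+1|t},o_{k+1|t,j})\geq 0\}$ is contained in $\{g_{k+1|t,j}(P_{k+1|t},o_{k+1|t,j})\geq 1\}$.

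Finally, monotonicity of probability under event inclusion upgrades the antecedent chance constraint to $\mathbb{P}(g_{k+1|t,j}(P_{k+1|t},o_{k+1|t,j})\geq 1)\geq 1-\epsilon$ for every $j\in\mathcal{I}_1^J$; chaining this with the mode-wise-to-joint implication recorded immediately before the proposition then delivers $\mathbb{P}(g_{k+1|t}(P_{k+1|t},o_{k+1|t})\geq 1)\geq 1-\epsilon$, which is the claim. I expect the only genuinely delicate point to be the homogeneity bookkeeping that pins the constant term to exactly one: this is what makes the affine surrogate $g^L\geq 0$ (rather than some shifted threshold) the correct convex inner approximation, and it is the sole place where the precise definition of $P^{ca}_{k+1|t,j}$ is used. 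Everything else reduces to joint convexity of the quadratic and the monotonicity of $\mathbb{P}$.
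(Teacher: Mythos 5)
Your proof is correct and follows essentially the same route as the paper's: the first-order convexity (tangent-plane) bound at $(P^{ca}_{k+1|t,j},\mu_{k+1|t,j})$, the observation that $g_{k+1|t,j}(P^{ca}_{k+1|t,j},\mu_{k+1|t,j})=1$ (which the paper merely asserts from the definition of $P^{ca}$, whereas you verify it explicitly via degree-two homogeneity), and the chain of event inclusions plus the mode-wise union bound stated before the proposition. No gaps.
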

\begin{proof}
Appendix~\ref{app:prop2proof} in full version \cite{nair2021stochastic}.
\end{proof}
We build on these approximations to complete the reformulation of \eqref{eq:oa_chance_constraint} for EV and TV trajectories given by \eqref{eq:EV_ltv_model}, \eqref{eq:TV_dyn_atv} in closed-loop with a feedback policy.
\subsection{Predictions using EV and TV state feedback policies}
\begin{figure}[H]
    \centering
    \begin{subfigure}{0.45\columnwidth}
        \centering
        \includegraphics[width=\columnwidth]{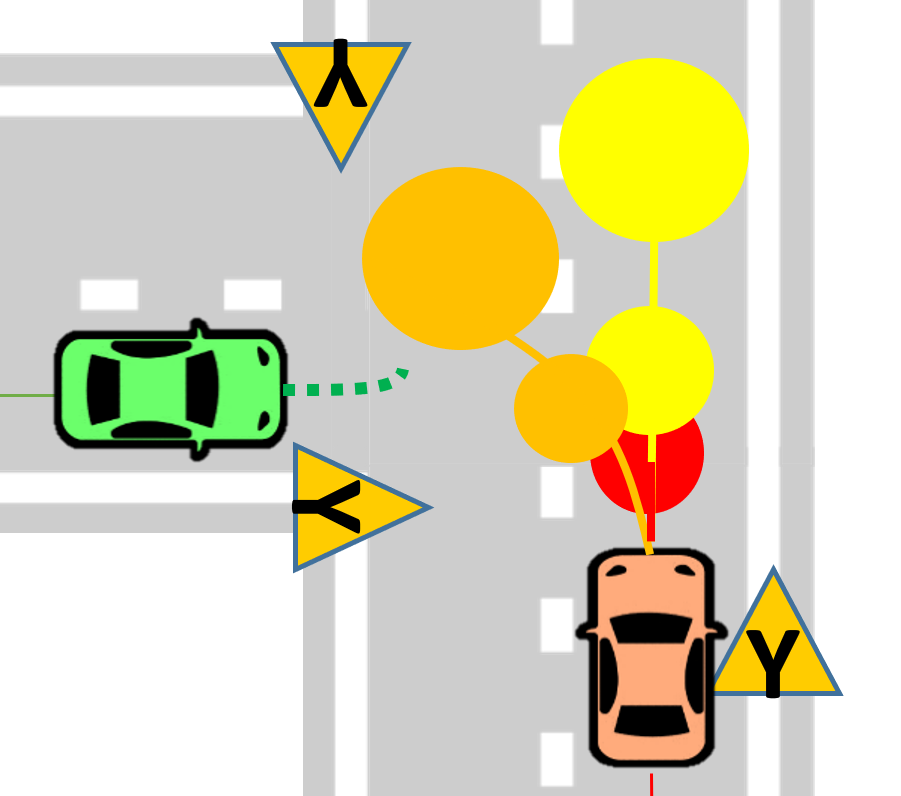} 
        \caption{Prediction with open-loop sequences $\mathbf{u}_t\in\mathbb{R}^{2\times N}$}
    \end{subfigure}\hfill %
    \begin{subfigure}{0.45\columnwidth}
        \centering
        \includegraphics[width=\columnwidth]{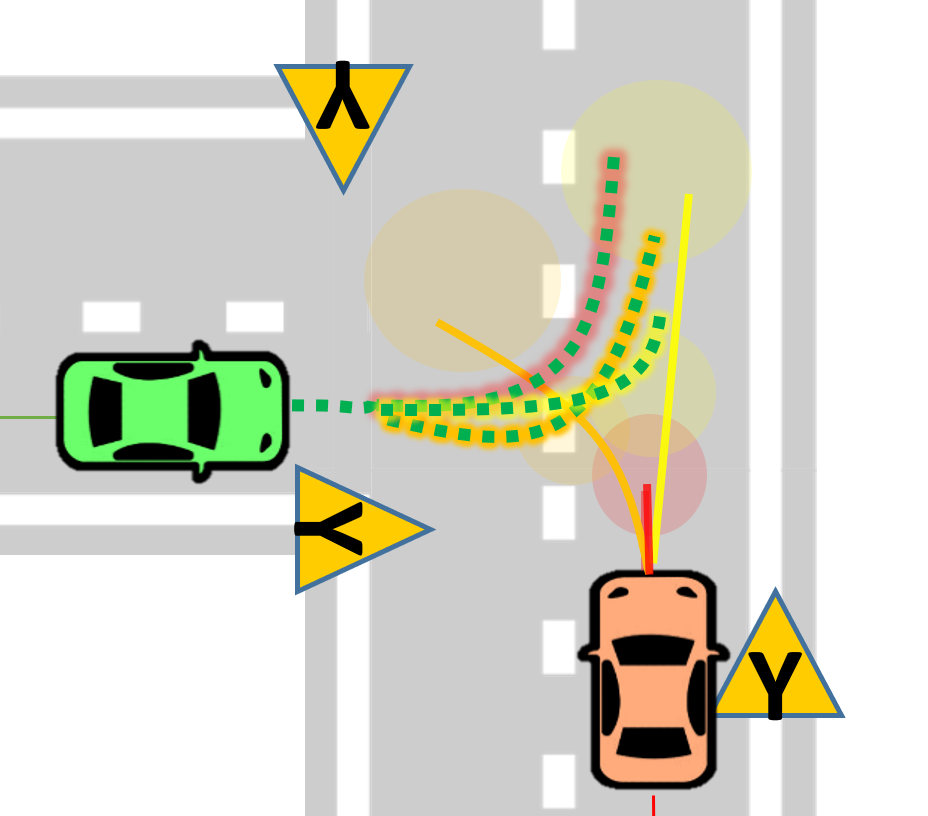} 
        \caption{Predictions with policies $\mathbf{u}_t\in\Pi_{\theta_t}(\mathbf{x}_t,\mathbf{o}_t)$}
    \end{subfigure}
    \caption{\small{In (a), solving \eqref{opt:SMPC_skeleton} over open-loop sequences can be conservative because EV prediction (green-dashed) from a single sequence of control inputs must satisfy all the obstacle avoidance constraints. In (b), optimizing over policies \eqref{opt:gen_pol_class} allows for different EV predictions depending on the TV trajectory realizations (green-dashed with highlights corresponding to different TV trajectories).\label{fig:OLvsFB}}}    
\end{figure}
We propose to use parameterized polices $\Pi_{\theta_t}(\mathbf{x}_t,\mathbf{o}_t)$ so that the EV's control $\mathbf{u}_t$ are functions of the EV and TV trajectories $\mathbf{x}_t, \mathbf{o}_t$ along the prediction horizon (as depicted in Fig. \ref{fig:OLvsFB}). Now we proceed to describe the parameterization $\Pi_{\theta_t}(\mathbf{x}_t,\mathbf{o}_t)$ and set of feasible parameters $\Theta_t$. 

Given the EV model \eqref{eq:EV_ltv_model} and mode-dependent TV model \eqref{eq:TV_dyn_atv}, we propose the following feedback policy for the EV:
\small
\begin{align}\label{eq:policy}
    &\Delta u_{k|t}=\pi_{k|t}(x_{k|t}, o_{k|t})=\nonumber\\
    &\begin{cases}h_{k|t}+\sum_{l=0}^{k-1}M_{l,k|t}w_{l|t} + K_{k|t}o_{k|t}&\text{if }k<\bar k\\
    h^1_{k|t}+\sum_{l=0}^{k-1}M^1_{l,k|t}w_{l|t} + K^1_{k|t}o_{k|t}&\text{if }k\geq\bar k,~\sigma_t=1\\
    \vdots &\vdots\\
    h^{J}_{k|t}+\sum_{l=0}^{k-1}M^{J}_{l,k|t}w_{l|t} + K^{J}_{k|t}o_{k|t}&\text{if }k\geq\bar k,~\sigma_t=J
    \end{cases},
\end{align}\normalsize
which uses State Feedback (SF) for the TV states but Affine Disturbance Feedback (ADF) for feedback over EV states instead of SF. As shown in \cite{goulart2006optimization}, the ADF: $h_{k|t}+\sum_{l=0}^{k-1}M_{l,k|t}w_{l|t}$ is equivalent to  SF: $h_{k|t}+F_{k|t}\Delta x_{k|t}$, but the state prediction $\Delta x_{k|t}$ with ADF is affine in $\Delta x_{0|t}$ and the policy parameters up to time $k$: $\{h_{s|t}, \{M_{l,s|t}\}_{l=0}^{s-1}\}_{s=0}^{k}$; whereas with SF,  $\Delta x_{k|t}$ involves nonlinear products of the parameters, $F_{k|t},\dots F_{0|t}$. This is beneficial because \eqref{eq:EV_cc}, \eqref{eq:oa_cc_lin} become affine chance constraints in the ADF policy parameters (at the cost of needing \small$O(N^2)$\normalsize\ parameters instead of \small$O(N)$\normalsize\ for SF). In the proof of Proposition~\ref{prop:3}, we show that despite using SF for the TV states, $\Delta x_{k|t}$ are affine in $\{K_{s|t}\}_{s=0}^k~\forall k\in\mathcal{I}_0^{n-1}$. This is beneficial for scaling our approach to multiple TVs because we use \small$O(2\cdot4\cdot N^2+2\cdot2n_{TV}\cdot N)$\normalsize\ parameters instead of \small${O(2\cdot(4+2n_{TV})\cdot N^2)}$\normalsize\ parameters, where $n_{TV}$ is the number of target vehicles.  Also notice that we use separate mode-dependent policies for $k\geq\bar k$ since the TV mode can be determined by \eqref{eq:mode_dist}. Conditioned on mode $j\in\mathcal{I}_1^J$, let $\mathbf{\Delta u}^j_t=[\Delta u^{\top}_{0|t,j}\dots\Delta u^{\top}_{N-1|t,j}]^\top=\mathbf{h}_t^j+\mathbf{M}_t^{j}\mathbf{w}_t+\mathbf{K}_t^j\mathbf{o}_t$ with $\mathbf{w}_t=[w^\top_{0|t}\dots w^\top_{N-1|t}]^\top$ and the stacked policy parameters $\mathbf{h}_t^j\in\mathbb{R}^{2N}, \mathbf{M}_t^{j}\in\mathbb{R}^{2N\times 4N},  \mathbf{K}_t^j\in\mathbb{R}^{2N\times 2N}$ (\eqref{mat:h}-\eqref{mat:M} in appendix~\ref{app:matrices}\footref{website}). Given $x_t$ and $o_t$, the set of policy parameters that satisfy the EV state and input chance constraints \eqref{eq:EV_cc} and the multi-modal collision avoidance chance constraint \eqref{eq:oa_cc_lin} is given by
\small\begin{align*}
    \Theta_t(x_t,o_t)&=
    \left\{\{\mathbf{h}_t^j,\mathbf{M}_t^j,\mathbf{K}_t^j\}_{j=1}^J\middle\vert\begin{aligned} \eqref{eq:EV_cc}, \eqref{eq:oa_cc_lin} \text{ hold } \forall k\in\mathcal{I}_0^{N-1},\\ \Delta x_{0|t}=x_t-x^{ref}_t, o_{0|t}=o_t\end{aligned}\right\}
\end{align*}\normalsize
and the policy parameterization given parameters $\theta_t=\{\mathbf{h}^j_t,\mathbf{M}_t^j, \mathbf{K}_t^j\}_{j=1}^J\in\Theta_t(x_t,o_t)$ is \small
\begin{align*}\Pi_{\theta_t}(\mathbf{x}_t,\mathbf{o}_t)&=\left\{\mathbf{h}_t^j+\mathbf{M}_t^{j}\mathbf{w}_t+\mathbf{K}_t^j\mathbf{o}_t \middle \vert\begin{aligned} j\in\mathcal{I}_1^J\text{ such that }\\ o_{\bar{k}|t}\in\mathcal{E}(\mu_{\bar{k}|t,j},\Sigma_{\bar{k}|t,j},\beta)\end{aligned} \right\}
\end{align*}
\normalsize
\begin{prop}\label{prop:3} The state predictions $\mathbf{x}_t$ in closed-loop with $\Pi_{\theta_t}(\mathbf{x}_t,\mathbf{o}_t)$ are affine in $\{\mathbf{h}^j_t,\mathbf{M}_t^j, \mathbf{K}_t^j\}_{j=1}^J$.
If $\epsilon<\frac{1}{2}$, then the set of policy parameters $\Theta_t(x_t,o_t)$ is a Second-Order Cone (SOC).
\end{prop}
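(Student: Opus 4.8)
The plan is to prove the two claims in sequence, treating the affine-dependence result as the engine that drives the conic reformulation.

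For the first claim, I would begin by unrolling the exogenous TV dynamics \eqref{eq:TV_dyn_atv}. Since $o_{k+1|t}=T_{k|t,\sigma_t}o_{k|t}+c_{k|t,\sigma_t}+n_{k,\sigma_t}$ is driven only by the TV's own state, mode, and noise and carries no dependence on the EV control, iterating from $o_{0|t}=o_t$ shows that each $o_{k|t}$, and hence the stacked vector $\mathbf{o}_t$, is affine in $o_t$ and the TV noises and is completely independent of the policy parameters $\{\mathbf{h}^j_t,\mathbf{M}^j_t,\mathbf{K}^j_t\}$. This is the crucial observation: although the policy \eqref{eq:policy} uses state feedback $K_{k|t}o_{k|t}$ on the TV state, the factor $o_{k|t}$ is parameter-free, so the product is affine in $K_{k|t}$ and introduces no bilinear coupling between gains, unlike state feedback on the EV's own closed-loop state. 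I would then substitute \eqref{eq:policy} into the EV model \eqref{eq:EV_ltv_model} and argue by induction on $k$: assuming $\Delta x_{k|t}$ is affine in the parameters up to stage $k$ (and in $\Delta x_{0|t}$, $\mathbf{w}_t$, $\mathbf{o}_t$), the update adds $B_{k|t}(h_{k|t}+\sum_l M_{l,k|t}w_{l|t}+K_{k|t}o_{k|t})$, each summand being affine in the corresponding parameter; the $\mathbf{h},\mathbf{M}$ portion is exactly the affine disturbance-feedback parameterization of \cite{goulart2006optimization}, which already avoids parameter products for EV-state feedback. Stacking over the horizon then gives $\mathbf{\Delta x}_t$ as an explicit affine function of $\{\mathbf{h}^j_t,\mathbf{M}^j_t,\mathbf{K}^j_t\}_{j=1}^J$, proving the first claim.

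For the second claim, I would use this affine structure to show that every defining constraint of $\Theta_t(x_t,o_t)$ is a second-order cone constraint. Conditioned on a mode $j$, the EV state and input and the linearized avoidance map $g^L_{k+1|t,j}$ from Proposition~\ref{prop:2} are all affine in the parameters and in jointly Gaussian noises, hence each scalar quantity appearing in \eqref{eq:EV_cc} and \eqref{eq:oa_cc_lin} is Gaussian with a mean that is affine in $\theta_t$ and a standard deviation equal to the Euclidean norm of a vector that is itself affine in $\theta_t$ (the noise-gain rows carry $\mathbf{M}^j_t,\mathbf{K}^j_t$ linearly). Reducing the polytopic joint chance constraints in \eqref{eq:EV_cc} to individual half-space constraints by a fixed risk allocation (Boole's inequality), the standard single-sided Gaussian reformulation turns each $\mathbb{P}(a^\top z\le b)\ge 1-\epsilon$ into $a^\top\bar z(\theta_t)+\Phi^{-1}(1-\epsilon)\,\Vert L(\theta_t)\Vert\le b$, where $\Phi$ is the standard normal CDF, $\bar z(\theta_t)$ is affine, and $L(\theta_t)$ is an affine map whose norm is the relevant standard deviation.

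The role of the hypothesis $\epsilon<\tfrac12$ is precisely to make $\Phi^{-1}(1-\epsilon)>0$, so that the reformulated constraint takes the form $\Vert\text{affine}\Vert\le\text{affine}$, i.e. a genuine second-order cone constraint; for $\epsilon>\tfrac12$ the coefficient flips sign and the constraint becomes reverse-convex. Intersecting the finitely many such constraints over all $k$, all modes $j$, and all polytope rows then exhibits $\Theta_t(x_t,o_t)$ as a second-order-cone-representable (hence convex) set. I expect the main obstacle to be bookkeeping rather than conceptual: carefully verifying in the stacked notation that state feedback on the TV leaves the predictions affine (no gain products), and that the standard deviation in each chance constraint is genuinely the norm of an affine — not merely a quadratic — function of $\theta_t$, since it is this affineness, together with $\Phi^{-1}(1-\epsilon)\ge 0$, that yields the conic form.
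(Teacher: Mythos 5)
Your proposal is correct and follows essentially the same route as the paper's proof: the exogeneity of the TV trajectory $\mathbf{o}^j_t$ (affine in $o_t$ and the noises, independent of the policy parameters) is exactly what makes the $\mathbf{K}^j_t$-feedback term, and hence the stacked prediction $\boldsymbol{\Delta x}_t$, affine in $\{\mathbf{h}^j_t,\mathbf{M}^j_t,\mathbf{K}^j_t\}_{j=1}^J$, and the standard Gaussian reformulation with $\Phi^{-1}(1-\epsilon)>0$ (equivalently $\Phi^{-1}(\epsilon)<0$ in the paper's sign convention) yields $\Vert\text{affine}\Vert\le\text{affine}$ for each constraint, whose intersection is a SOC. Your added remark about splitting the joint polytopic chance constraint \eqref{eq:EV_cc} row-wise via Boole's inequality is a detail the paper leaves implicit, but it does not change the argument.
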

\begin{proof}
Appendix~\ref{app:prop3proof} in full version \cite{nair2021stochastic}.
\end{proof}

\subsection{SMPC Optimization Problem}
We define the cost \eqref{opt:obj} of the SMPC optimization problem \eqref{opt:SMPC_skeleton} to penalise deviations of the EV state and input trajectories from the reference trajectory \eqref{eq:EV_ref_traj} as
\small
\begin{align*}
    J_t(\mathbf{x}_t,\mathbf{u}_t)=&\mathbb{E}\left(\sum\limits_{k=0}^{N-1}\Delta x_{k+1|t}^\top Q\Delta x_{k+1|t} + \Delta u_{k|t}^\top R\Delta u_{k|t}\right)
\end{align*}
\normalsize
where $Q\succ 0, R\succ 0$. After explicitly substituting the dynamics constraints \eqref{eq:EV_ltv_model},\eqref{eq:TV_dyn_atv},  we obtain the EV control \eqref{eq:SMPC} from our SMPC formulation as:
\small$$\min\limits_{\{\mathbf{h}_t^j,\mathbf{K}_t^j,\mathbf{M}_t^j\}_{j=1}^J\in \Theta_t(x_t,o_t)}J_t(\mathbf{x}_t,\mathbf{u}_t).$$\normalsize The next proposition characterises this optimization problem.
\begin{prop}\label{prop:4}
The SMPC control action \eqref{eq:SMPC} is given by solving the following Second-Order Cone Program (SOCP):
\small
\begin{equation}\label{opt:SMPC}
	\begin{aligned}
	\min_{\substack{s,\{\mathbf{h}_t^j,\mathbf{K}_t^j,\mathbf{M}_t^j\}_{j=1}^J}}&\  \displaystyle s \\
		\text{s.t. }  \quad\quad&\{\mathbf{h}_t^j,\mathbf{K}_t^j,\mathbf{M}_t^j\}_{j=1}^J\in \Theta_t(x_t,o_t),\\
		 \quad\quad&N_t(\{\mathbf{h}_t^j,\mathbf{K}_t^j,\mathbf{M}_t^j\}_{j=1}^J)\leq s+r_t
	\end{aligned}
\end{equation}\normalsize
where $N_t(\cdot)$ is convex and quadratic in $\{\mathbf{h}_t^j,\mathbf{K}_t^j,\mathbf{M}_t^j\}_{j=1}^J$ and $r_t$ is a constant determined by $x_t$, the EV reference trajectory \eqref{eq:EV_ref_traj} and the process noise covariance $\Sigma_w$. 
\end{prop}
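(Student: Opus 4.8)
The plan is to show that, after substituting the closed-loop dynamics, the expected cost $J_t$ is a convex quadratic function of the stacked policy parameters $\theta_t=\{\mathbf{h}^j_t,\mathbf{M}^j_t,\mathbf{K}^j_t\}_{j=1}^J$ plus a parameter-independent constant, and then to pass to the standard epigraph / second-order-cone reformulation of convex-quadratic minimization. Combined with Proposition~\ref{prop:3}, which already guarantees that the feasible set $\Theta_t(x_t,o_t)$ is a second-order cone, this exhibits \eqref{opt:SMPC} as an SOCP.

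First I would express the cost in stacked form: writing $\mathbf{\Delta x}_t$ and $\mathbf{\Delta u}_t$ for the stacked state and input deviations, the cost reads $J_t=\mathbb{E}\big[\mathbf{\Delta x}_t^\top \bar Q\,\mathbf{\Delta x}_t+\mathbf{\Delta u}_t^\top \bar R\,\mathbf{\Delta u}_t\big]$ with $\bar Q=I_N\otimes Q$ and $\bar R=I_N\otimes R$, both positive definite since $Q,R\succ 0$. Because the policy \eqref{eq:policy} is mode-dependent for $k\geq\bar k$, this expectation is taken jointly over the process noise $\mathbf{w}_t$, the TV trajectory $\mathbf{o}_t$, and the mode $\sigma_t$; conditioning on $\sigma_t=j$ and using the prior mode probabilities gives $J_t=\sum_{j=1}^J p_{t,j}\,\mathbb{E}\big[\,\cdot\mid\sigma_t=j\,\big]$. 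Conditioned on mode $j$, Proposition~\ref{prop:3} ensures $\mathbf{\Delta x}_t$ is affine in $\theta_t$, and $\mathbf{\Delta u}^j_t=\mathbf{h}^j_t+\mathbf{M}^j_t\mathbf{w}_t+\mathbf{K}^j_t\mathbf{o}_t$ is affine in $\theta_t$ for fixed $(\mathbf{w}_t,\mathbf{o}_t)$. Hence each conditional summand is the expectation of a quadratic form in affine functions of the parameters, which I would evaluate through the decomposition $\mathbb{E}[z^\top \bar R z]=\mathbb{E}[z]^\top \bar R\,\mathbb{E}[z]+\mathrm{tr}(\bar R\,\mathrm{Cov}(z))$ together with $\mathbb{E}[\mathbf{w}_t]=0$ and the independence of $\mathbf{w}_t$ and $\mathbf{o}_t$. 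The mean term contributes $(\mathbf{h}^j_t+\mathbf{K}^j_t\,\mathbb{E}\mathbf{o}_t)^\top\bar R(\mathbf{h}^j_t+\mathbf{K}^j_t\,\mathbb{E}\mathbf{o}_t)$, quadratic in $\mathbf{h}^j_t,\mathbf{K}^j_t$; the covariance term contributes $\mathrm{tr}(\bar R\,\mathbf{M}^j_t\Sigma_{\mathbf w}\mathbf{M}^{j\top}_t)+\mathrm{tr}(\bar R\,\mathbf{K}^j_t\Sigma^{j}_{\mathbf o}\mathbf{K}^{j\top}_t)$, quadratic in $\mathbf{M}^j_t,\mathbf{K}^j_t$; the analogous $\bar Q$ term follows identically from the affine map of Proposition~\ref{prop:3}.

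Collecting these, the parameter-dependent part is a nonnegative combination (weights $p_{t,j}\geq0$) of convex quadratics and is therefore itself convex quadratic, which I identify with $N_t(\theta_t)$; the terms that survive when $\theta_t=0$—those generated by the initial condition $\Delta x_{0|t}=x_t-x^{ref}_t$ propagated through the reference dynamics and by the process-noise covariance $\Sigma_w$—are independent of the parameters and collect into the constant $r_t$. Since $N_t$ is convex quadratic, writing $N_t(\theta_t)=\Vert W_t\theta_t+w_t\Vert^2+(\text{affine in }\theta_t)$ via a square root $W_t^\top W_t$ of its (positive semidefinite) Hessian, the epigraph constraint $N_t(\theta_t)\le s+r_t$ is a convex quadratic inequality and admits the standard rotated second-order-cone representation. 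Minimizing the linear objective $s$ subject to this SOC constraint and to $\theta_t\in\Theta_t(x_t,o_t)$—itself an SOC by Proposition~\ref{prop:3}—is exactly the SOCP \eqref{opt:SMPC}.

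The main obstacle I anticipate is the bookkeeping in the expected-cost computation: one must track the joint expectation over noise, TV trajectory, and mode, exploit the affine-in-parameters structure of Proposition~\ref{prop:3} (which is the genuinely nontrivial input), and verify that the cross-covariance and mode-coupling terms either vanish by independence or reassemble into a bona fide convex quadratic so that positive semidefiniteness of the Hessian is preserved. Once the cost is in the form $N_t(\theta_t)+(\text{const})$ with $N_t$ convex quadratic, the SOC epigraph reformulation and the assembly with $\Theta_t$ are routine.
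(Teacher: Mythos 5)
Your proposal is correct and follows essentially the same route as the paper's proof: pass to the epigraph form, condition the expected cost on the mode with prior weights $p_{t,j}$, use the affine-in-parameters closed-loop expressions from Proposition~\ref{prop:3} together with the mean-plus-trace decomposition of the quadratic expectation (with $\mathbb{E}[\mathbf{w}_t]=0$ and independence of $\mathbf{w}_t$ and the TV noise), identify the parameter-dependent convex quadratic as $N_t$ and the residual initial-condition and $\Sigma_w$ terms as $-r_t$, and represent the resulting convex quadratic constraint as an SOC. The paper simply carries out this computation explicitly with the stacked matrices of Appendix~\ref{app:matrices}.
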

\begin{proof}
Appendix~\ref{app:prop4proof} in full version \cite{nair2021stochastic}
\end{proof}

\section{Experimental Design}
\label{sec:expt_design}
In order to assess the benefits of our proposed stochastic MPC formulation, we use the CARLA~\cite{carla_sim_2017} simulator to run closed-loop simulations\footref{website}.  In this section, we describe the simulation environment, evaluated scenarios, and prediction framework used to generate multimodal predictions.  We then show how the SMPC formulation is integrated into a planning framework to address these predictions.  Finally, we provide a set of metrics and baseline policies used to evaluate the performance of our approach.

\subsection{CARLA Simulation Environment}
\begin{figure}[h]
    \centering
    \includegraphics[width=0.7\columnwidth]{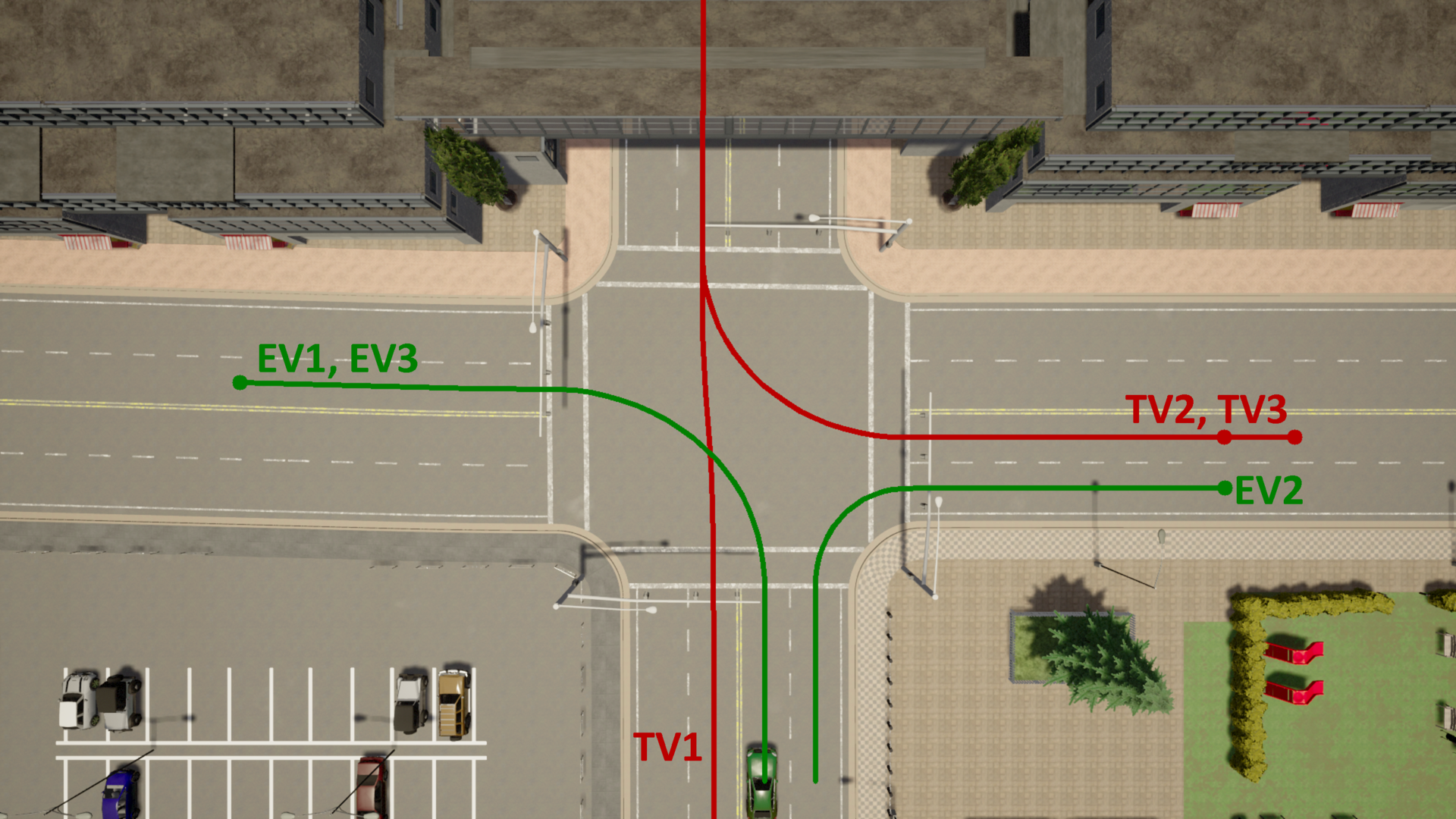}
    \caption{\small{Routes used for EV and TV vehicles with the scenario indicated by the numeric label.  The EV follows a green path, while the TV follows a red path with the same scenario index.}}
    \label{fig:scen_carla}    
\end{figure}
The EV is tasked with passing through the intersection depicted in Figure~\ref{fig:scen_carla}, while avoiding any TVs simultaneously driving through the same intersection.  The scenarios are:
\begin{enumerate}
    \item EV turns left while TV drives straight.
    \item EV turns right while TV turns left into the same road segment.
    \item EV turns left while TV turns left into the opposite road segment.
\end{enumerate}

To ensure improved repeatability of our experiments, we use the synchronous mode of CARLA.  This ensures that all processing (prediction, planning, and control) is complete before advancing the simulation.  Thus, our results only consider the impact of the control inputs selected and not the time delays incurred in computing them\footnote[2]{All experiments were run on a computer with a Intel i9-9900K CPU, 32 GB RAM, and a RTX 2080 Ti GPU.}. 

The control for the TV is described by a simple nonlinear MPC (NMPC) scheme to track a pre-specified reference. For realism, the TV's NMPC also uses short predictions of the EV for simple distance-based obstacle avoidance constraints.  Additionally, we note that while the intersection has traffic lights, these are ignored in our experiments.  Effectively, each scenario is treated as an interaction at an unsignalized intersection.
%
\small
\begin{table*}[t]
    \centering
    \caption{Closed-loop performance comparison across all scenarios. }
    
    \label{tab:comparison_scenario_all}    
    \begin{tabular}[c]{|c |c | *{1}{c} | *{3}{c} | *{3}{c} | *{1}{c} |}
        \hline
        \multirow{2}{*}{\textbf{Scenario}} &
        \multirow{2}{*}{\textbf{Policy}} &
        \multicolumn{1}{c|}{\textbf{Mobility}} &
        \multicolumn{3}{c|}{\textbf{Comfort}} &
        \multicolumn{3}{c|}{\textbf{Conservatism}} &
        \multicolumn{1}{c|}{\textbf{Efficiency}} \\
        & & \mobility & \comfort & \conservatism & \efficiency \\
        \hline
        \multirow{3}{*}{\textbf{1}}
        & \textbf{SMPC\_BL}    & 1.099 & 1.830 & 7.336 & 12.824 & 5.396 & 5.270 &  98.66 & 33.063 \\
        & \textbf{SMPC\_OL}    & $\mathbf{1.096}$ & 2.113 & $\mathbf{2.072}$ & 10.746 & 3.790 & 4.848 & $\mathbf{100.00}$ &  $\mathbf{1.600}$ \\
        & \textbf{SMPC}        & 1.232 & $\mathbf{1.217}$ & $\mathbf{2.098}$ &  $\mathbf{6.554}$ & $\mathbf{1.242}$ & $\mathbf{3.578}$ & $\mathbf{100.00}$ & 62.399 \\
        \hline
        \multirow{3}{*}{\textbf{2}}
        & \textbf{SMPC\_BL}    & 1.366 & 1.712 & 12.334 & 6.016 & 1.130 & 10.174 &  78.59 & 44.280 \\
        & \textbf{SMPC\_OL}    & 1.571 & 1.553 &  7.929 & $\mathbf{5.557}$ & 3.900 &  9.288 &  78.01 &  $\mathbf{1.676}$ \\
        & \textbf{SMPC}        & $\mathbf{1.294}$ & $\mathbf{1.012}$ &  $\mathbf{2.531}$ & 5.691 & $\mathbf{1.027}$ &  $\mathbf{7.884}$ & $\mathbf{100.00}$ & 63.949 \\
        \hline
        \multirow{3}{*}{\textbf{3}}
        & \textbf{SMPC\_BL}    & 1.370 & 1.620 & 10.971 & $\mathbf{4.378}$ & 2.707 & 11.582 &  83.09  & 43.372 \\
        & \textbf{SMPC\_OL}    & 1.211 & 1.878 &  3.944 & 8.247 & 1.603 & 11.962 &  93.07  &  $\mathbf{1.641}$ \\
        & \textbf{SMPC}        & $\mathbf{1.152}$ & $\mathbf{1.159}$ &  $\mathbf{2.143}$ & 8.595 & $\mathbf{0.734}$ &  $\mathbf{7.917}$ & $\mathbf{100.00}$  & 61.400 \\
        \hline
    \end{tabular}
\end{table*}
\normalsize
\subsection{Multimodal Prediction Architecture}
A key assumption in our SMPC approach is that the TV's future motion is described by a GMM \eqref{eq:GMM}.  To generate  this probability distribution, we implement a variant of the MultiPath prediction model proposed in \cite{multipath_2019}.  The input to this neural network architecture is a semantic image and pose history for the target vehicle in the scene.  These inputs are used to generate a GMM over future trajectories by regression with respect to a set of predetermined anchors and classification over the anchor probabilities.  We note, however, that our SMPC approach is appropriate for any prediction model that produces a GMM trajectory distribution.

To train the model, we selected a subset of the Lyft Level 5 prediction dataset~\cite{l5kit2020} with $16$ anchor trajectories identified using k-means clustering.  At runtime, we opted to truncate the GMM produced by MultiPath.  In particular, we pick the top $J=3$ modes and renormalize the mode probabilities correspondingly.

\subsection{Motion Planning Architecture}
Given the predicted multimodal distribution for the TV, we use the framework introduced in Section~\ref{sec:SMPC} to generate feedback control policies for deployment.  
In addition to predictions of the TV, a dynamically feasible EV reference trajectory \eqref{eq:EV_ref_traj} must be provided. In our work, a NMPC problem is solved using IPOPT~\cite{ipopt_2006} which tracks a high-level route provided by the CARLA waypoint API, while incorporating dynamic and actuation constraints.
Given the EV reference \eqref{eq:EV_ref_traj} and TV predictions \eqref{eq:GMM}, the SMPC optimization problem \eqref{opt:SMPC} is solved using Gurobi \cite{gurobi} to compute feedback control \eqref{eq:SMPC}. When an infeasible problem is encountered, a braking control is commanded. The corresponding control action $u^{*}_{0|t}$ is given as a reference to a low-level control module that sets the vehicle's steering, throttle, and brake inputs.  These loops are repeated until all vehicles in the scenario reach their destination.

\subsection{Policies}\label{sec:pol_list}
We evaluate the following set of policies: 
\begin{itemize}
    \item \textbf{SMPC}: Our proposed framework, given by solving \eqref{opt:SMPC} which optimizes over our policies \eqref{eq:policy}.
    \item \textbf{SMPC\_OL}: This model is an ablation of our approach, where $\mathbf{u}_t\in\mathbb{R}^{2\times N}$ replaces \eqref{opt:gen_pol_class} and the GMM~\eqref{eq:GMM} is used directly in \eqref{opt:TV_dyn} instead of \eqref{eq:mm_dist_markov}.
    \item \textbf{SMPC\_BL}: Based off \cite{wang2020non}, a nonlinear SMPC algorithm is used with the collision avoidance chance constraints \eqref{opt:oa_constr}  reformulated using VP concentration bounds.  The optimization is performed over open-loop sequences, i.e., $\mathbf{u}_t\in\mathbb{R}^{2\times N}$ instead of \eqref{opt:gen_pol_class}, and no additional structure is assumed as in \eqref{opt:TV_dyn}, i.e., $o_{k+1|t}\sim f^{TV}_k(o_t)$.
\end{itemize}

\subsection{Evaluation Metrics}
We introduce a set of closed-loop behavior metrics to evaluate the policies.  A desirable planning framework enables high mobility without being over-conservative, allowing the timely completion of the driving task, while maintaining passenger comfort. The computation time should  also not be exorbitant to allow for real-time processing of updated scene information.
The following metrics are used to assess these factors:
\begin{itemize}
    \item \textbf{Mobility}: (1) $\tilde{\mathcal{T}}_{episode}$: Time the EV takes to reach its goal normalised by the time taken by the reference.
    \item \textbf{Comfort}: (1) $\tilde{\mathcal{A}}_{lat}$: Peak lateral acceleration normalised by the peak lateral acceleration in reference, (2) $\bar{\mathcal{J}}_{long}$: Average longitudinal jerk,  and (3) $\bar{\mathcal{J}}_{lat}$: Average lateral jerk.  High values are undesirable, linked to sudden braking or steering.
    \item \textbf{Conservatism}: (1) $\Delta\tau$: Deviation of the closed-loop trajectory from the reference trajectory, (2) $\bar{d}_{min}$: Closest distance between the EV and TV, and (3) $\mathcal{F}$: Feasibility \% of the SMPC optimization problem.  While an increase in (1) and (2) corresponds to higher safety, a large value could indicate an over-conservative planner and result in low values of (3).
    \item \textbf{Computational Efficiency}: (1) $\bar{\mathcal{T}}_{solve}$: Average time taken by the solver; lower is better.
\end{itemize}

\section{Results}
\label{sec:results}
Now, we present the results of the various SMPC policies (Section~\ref{sec:pol_list}). For each scenario, we deploy each policy for 4 different initial conditions by varying: (1) initial distance from the intersection$\in\{10 \text{ m}, 20 \text{ m}\}$ and (2) initial speed$\in\{8 \text{m s}^{-1}, 10 \text{m s}^{-1}\}$. For all the policies, we use a prediction horizon of $N=10$, a discretization time-step of $dt= 0.2 \text{ s}$, and a risk level of $\epsilon=0.05$ for the chance constraints.



The performance metrics, averaged across the initial conditions, are shown in Table~\ref{tab:comparison_scenario_all}.  In terms of mobility, \textbf{SMPC} is able to improve or maintain mobility compared to the baselines.  There is a noticeable improvement in comfort and conservatism metrics, as the \textbf{SMPC} can stay close to the TV-free reference trajectory without incurring high acceleration/jerk and keeping a safe distance from the TV. Remarkably, \textbf{SMPC} was also able to find feasible solutions for the SMPC optimization problem 100\% of the time in our experiments, because the formulation optimizes over policies. Finally, while the \textbf{SMPC\_OL} ablation is the fastest in solve time (due to fewer decision variables), the \textbf{SMPC} can be used real-time at approximately 20 Hz with further optimization.  

The results highlight the benefits of optimizing over parameterized policies in the SMPC formulation for the EV, towards collision avoidance given the multi-modal GMM predictions of the TV.

\bibliographystyle{IEEEtran}
\bibliography{references.bib}

\appendix
\subsection{Matrix definitions}\label{app:matrices}
  \footnotesize 
\begin{align}
    &\mathbf{h}_t^j=[h^\top_{0|t}\dots h^\top_{\bar k-1|t}\ h^{j\top}_{\bar k|t}\dots h^{j\top}_{N-1|t} ]^\top\label{mat:h}\\
    &\mathbf{K}_t^j=\text{blkdiag}\left(K_{0|t},\dots,K_{\bar k-1|t}, K^j_{\bar k|t},\dots, K^j_{N-1|t}\right)\label{mat:K}\\
    &\mathbf{M}_t^{j}=\begin{bmatrix}
    O&\hdots&\hdots&\hdots& O\\M_{0,1|t}&O&\hdots&\hdots& O\\
    \vdots&\vdots&\vdots&\vdots\\
    M_{0,\bar k-1|t}&\hdots M_{\bar k-2,\bar k-1|t}&O&\hdots& O\\
    M^{j}_{0,\bar k|t}&\hdots &M^{j}_{\bar k-1,\bar k|t}&\hdots& O\\
    \vdots&\vdots&\vdots&\vdots\\
    M^{j}_{0,N-1|t}&\hdots &\hdots& M^{j}_{N-2,N-1|t}&O
    \end{bmatrix}\label{mat:M}\\
    &\mathbf{A}_t=\begin{bmatrix}I_4\\ A_{0|t}\\A_{1|t}A_{0|t}\\\vdots\\\prod\limits_{k=0}^{N-1}A_{k|t}\end{bmatrix},  \mathbf{B}_t=\begin{bmatrix}O&\hdots&\hdots& O\\B_{0|t}&O&\hdots&O\\A_{1|t}B_{0|t}&B_{1|t}&\hdots&O\\\vdots&\ddots&\ddots&\vdots\\\prod\limits_{k=1}^{N-1}A_{k|t}B_{0|t}&\hdots&\dots&B_{N-1|t}\end{bmatrix},\label{mat:AB}\\
    &\mathbf{T}^j_t=\begin{bmatrix}I_2\\ T_{0|t,j}\\T_{1|t,j}T_{0|t,j}\\\vdots\\\prod\limits_{k=0}^{N-1}T_{k|t,j}\end{bmatrix}, \mathbf{C}^j_t=\begin{bmatrix}O\\c_{0|t,j}\\c_{1|t,j}+T_{1|t,j}c_{0|t,j}\\\vdots\\c_{N-1|t,j}+\sum\limits_{k=0}^{N-1}\prod\limits_{l=k+1}^{N-1}T_{l|t,j} c_{k|t,j}\end{bmatrix}\label{mat:TC}\\
    &\mathbf{E}_{t}=\begin{bmatrix}O&\hdots&\hdots& O\\I_4&O&\hdots&O\\A_{1|t}&I_4&\hdots&O\\\vdots&\ddots&\ddots&\vdots\\\prod\limits_{k=1}^{N-1}A_{k|t}&\hdots&\dots&I_4\end{bmatrix},\  \mathbf{L}^j_{t}=\begin{bmatrix}O&\hdots&\hdots& O\\I_2&O&\hdots&O\\T_{1|t,j}&I_2&\hdots&O\\\vdots&\ddots&\ddots&\vdots\\\prod\limits_{k=1}^{N-1}T_{k|t,j}&\hdots&\dots&I_2\end{bmatrix} \label{mat:EL}\\
    &\boldsymbol{\Sigma}_w=I_N\otimes \Sigma_w,\  \boldsymbol{\Sigma}^j_n=\text{blkdiag}(I_{N-1}\otimes \Sigma_n,\Sigma_{1|t,j}) \label{mat:Sigma}\\
    &\mathbf{Q}=I_{N+1}\otimes Q,\ \mathbf{R}=I_N\otimes R,\label{mat:QR}
    \end{align}
\normalsize 
\subsection{Proof of proposition~\ref{prop:1}}\label{app:prop1proof}
Defining $n_{k|t,j}\sim\mathcal{N}(0,\bar\Sigma_{k+1|t,j})~\forall k\in\mathcal{I}_0^{N-1}$, we can use the closure of Gaussian random variables under linear combinations and the orthogonality principle, to rewrite \eqref{eq:tv_dist_j} as
$o_{k+1|t,j}=T_{k|t,j}o_{k|t,j}+c_{k|t,j}+n_{k|t,j}$.
\begin{itemize}
\item Using the definition of $c_{k|t,j}$, we have $o_{k+1|t,j}=T_{k|t,j}(o_{k|t,j}-\mu_{k|t,j})+\mu_{k+1|t,j}+n_{k|t,j}$. Thus, 
\small
\begingroup
\allowdisplaybreaks
\begin{align*}
\mathbb{E}&[\Vert o_{k+1|t,j}-\mu_{k+1|t,j}\Vert^2_{\Sigma^{-1}_{k+1|t,j}}|\ o_{k|t,j}]\\
=&\mathbb{E}[\Vert T_{k|t,j}(o_{k|t,j}-\mu_{k|t,j})+n_{k|t,j}\Vert^2_{\Sigma^{-1}_{k+1|t,j}}|\ o_{k|t,j}]\\
=&\mathbb{E}[\Vert\sqrt{\Sigma^{-1}_{k+1|t,j}}(T_{k|t,j}(o_{k|t,j}-\mu_{k|t,j})+n_{k|t,j})\Vert^2|\ o_{k|t,j}]\\
=&\Vert\sqrt{\Sigma^{-1}_{k+1|t,j}}T_{k|t,j}(o_{k|t,j}-\mu_{k|t,j})\Vert^2\\&~~~~
+\mathbb{E}[\Vert\sqrt{\Sigma^{-1}_{k+1|t,j}} n_{k|t,j}\Vert^2]\\
=&\Vert\sqrt{\Sigma^{-1}_{k|t,j}}(o_{k|t,j}-\mu_{k|t,j})\Vert^2+\text{tr}(\Sigma^{-1}_{k+1|t,j}\Sigma_n)\\
=&S^2+\text{tr}(\Sigma^{-1}_{k+1|t,j}\Sigma_n)
\end{align*}
\endgroup
\normalsize where the second to last equality comes from the definition of $T_{k|t,j}$.
    \item Expressing $o_{k|t,j}$ in terms of $o_t$, we get
\small
\begin{align*}
o_{k|t,j}=&\prod\limits_{l=0}^{k-1}T_{l|t,j}o_{t}+\sum\limits_{m=0}^{k-1}\prod\limits_{l=m}^{k-2}T_{l+1|t,j}(c_{m|t,j}+n_{m|t,j})\\
=&\sqrt{\Sigma_{k|t,j}}\sqrt{\Sigma_{1|t,j}^{-1}}o_t\\
~&+\mu_{k|t,j}-\sqrt{\Sigma_{k|t,j}}\sqrt{\Sigma_{k-1|t,j}^{-1}}\mu_{k-1|t,j}\\
~&+\sqrt{\Sigma_{k|t,j}}(\sqrt{\Sigma_{k-1|t,j}^{-1}}\mu_{k-1|t,j}-\sqrt{\Sigma_{k-2|t,j}^{-1}}\mu_{k-2|t,j})\\
~&\dots-\sqrt{\Sigma_{k|t,j}}\sqrt{\Sigma_{1|t,j}^{-1}}o_t\\
~&+n_{k-1|t,j}+\sum\limits_{m=0}^{k-2}\prod\limits_{l=m}^{k-2}T_{l+1|t,j}n_{m|t,j}\\
=&\mu_{k|t,j}+n_{k-1|t,j}+\sqrt{\Sigma_{k|t,j}}\sqrt{\Sigma_{1|t,j}^{-1}}n_{0|t,j}\\&+\sum\limits_{m=1}^{k-2}\prod\limits_{l=m}^{k-2}\sqrt{\Sigma_{k-1|t,j}}\sqrt{\Sigma_{m|t,j}^{-1}}n_{m|t,j}
\end{align*}
\normalsize
Computing the mean and expectation of this random variable gives the desired distribution. Since the ellipsoids \small$\mathcal{E}(\mu_{k|t,j},\Sigma_{k|t,j}+\Sigma_n+\bar{\Sigma}_{k|t,j}, \beta), \mathcal{E}(\mu_{k|t,j},\Sigma_{k|t,j}, \beta)$\normalsize\ share the same center, the desired containment is verified by seeing that $\beta(\Sigma_{k|t,j}+\Sigma_n+\bar{\Sigma}_{k|t,j})\succ\beta\Sigma_{k|t,j}$.
\end{itemize}
\subsection{Proof of proposition~\ref{prop:2}}\label{app:prop2proof}
 First note that $P^{ca}_{k|t,j}$ is defined such that $g_{k|t,j}(P^{ca}_{k|t,j}, \mu_{k|t,j})=1$. For any convex function $f(x)$, we have $\forall x_0,x: f(x)\geq f(x_0)+\partial_x f(x_0)(x-x_0)$. Since we $g_{k|t,j}(\cdot)$ is convex, we have
 \small $g_{k|t,j}(P,o)\geq g_{k|t,j}(P^{ca}_{k|t,j}, \mu_{k|t,j})+g^L_{k|t,j}(P,o)=1+g^L_{k|t,j}(P,o)$\normalsize.\ So $g^L_{k|t,j}(P,o)\geq 0\Rightarrow g_{k|t,j}(P,o)\geq 1$, and $\forall k\in\mathcal{I}_1^N$,
 \small
 \begin{align*}
 &\mathbb{P}(g^L_{k|t,j}(P_{k|t},o_{k|t,j})\geq0\ )\geq 1-\epsilon~\forall j\in\mathcal{I}_1^J\\ \Rightarrow& \mathbb{P}(g_{k|t,j}(P_{k|t},o_{k|t,j})\geq1)\geq 1-\epsilon ~\forall j\in\mathcal{I}_1^J
 \\ \Rightarrow& \mathbb{P}(g_{k|t}(P_{k|t},o_{k|t})\geq1)\geq 1-\epsilon
 \end{align*}\normalsize
 $\hfill\blacksquare$
  \subsection{Proof of proposition~\ref{prop:3}}\label{app:prop3proof}
  Let $\boldsymbol{\Delta x}_t=[\Delta x^\top_{0|t}\dots \Delta x^\top_{N|t}]^\top$, $\boldsymbol{\Delta u}_t=[\Delta u^\top_{0|t}\dots \Delta u^\top_{N-1|t}]^\top$,  $\mathbf{o}^j_t=[o^\top_{0|t}\ o^\top_{1|t,j}\dots o^\top_{N|t,j}]^\top$ and $\mathbf{n}^j_t=[n^\top_{0|t,j}\dots n^\top_{N-1|t,j}]^\top$. Then the random variable $\boldsymbol{\Delta x}_t$ for the EV trajectory with policy \eqref{eq:policy} conditioned on the mode $\sigma_t=j$, given by $\boldsymbol{\Delta u}^j_t=\mathbf{M}^j_t\mathbf{w}_t+\mathbf{h}^j_t+\mathbf{K}^j_t\mathbf{o}^j_t$, can be written as
  \small
  \begin{align*}
      \boldsymbol{\Delta x}_t=&\mathbf{A}_t\Delta x_{0|t}+\mathbf{B}_t\mathbf{h}^j_t+\mathbf{B}_t\mathbf{K}^j_t(\mathbf{T}^j_to_t+\mathbf{C}^j_t+\mathbf{L}^j_t\mathbf{n}^j_t)\\+&(\mathbf{E}_t+\mathbf{B}_t\mathbf{M}^j_t)\mathbf{w}_t.
  \end{align*}
  \normalsize
  where the matrices $\mathbf{A}_t, \mathbf{B}_t,\mathbf{T}^j_t, \mathbf{C}_t^j, \mathbf{L}_t^j, \mathbf{E}_t$ are defined in \eqref{mat:AB}-\eqref{mat:EL}. Clearly, $\boldsymbol{\Delta x}_t$ (and so, $\mathbf{x}_t$ too) is affine in $\mathbf{h}^j_t,\mathbf{M}^j_t,\mathbf{K}^j_t$. 
  Let the constant matrices $S^x_k, S^o_k, S^u_k$ be such that $S^x_k\boldsymbol{\Delta x}_t=\Delta x_{k|t}$, $S^o_k\mathbf{o}^j_t=o_{k|t,j}$, $S^u_k\boldsymbol{\Delta u}_t=\Delta u_{k|t}$. Defining \footnotesize$G^x_{k|t,j}=\partial_P g_{k|t,j}(P^{ca}_{k|t,j},\mu_{k|t,j}),\ G^o_{k|t,j}=\partial_o g_{k|t,j}(P^{ca}_{k|t,j},\mu_{k|t,j}),\  \tilde{g}_{k|t,j}=G^x_{k|t,j}(P^{ca}_{k|t}-P^{ref}_{k|t,j})+G^o_{k|t,j}\mu_{k|t,j}$\normalsize,\ we can rewrite the affine chance constraint \eqref{eq:oa_cc_lin} for each $j\in\mathcal{I}_1^J$, $k\in\mathcal{I}_1^N$ as
  \footnotesize
  \begin{align*}
      &\mathbb{P}(G^x_{k|t,j}S^x_k\boldsymbol{\Delta x}_t+G^o_{k|t,j}S^o_k\mathbf{o}^j_t\geq \tilde g_{k|t,j})\geq 1-\epsilon\\
      \Rightarrow&\mathbb{P}\Big([G^x_{k|t,j}S^x_k(\mathbf{E}_t+\mathbf{B}_t\mathbf{M}^j_t)\ (G^o_{k|t,j}S^o_k+G^x_{k|t,j}S^x_k\mathbf{B}_t\mathbf{K}^j_t)\mathbf{L}^j_t]\begin{bmatrix}\mathbf{w}_t\\\mathbf{n}^j_t\end{bmatrix}\\
      &\leq \tilde g_{k|t,j}-(G^o_{k|t,j}S^o_k+G^x_{k|t,j}S^x_k\mathbf{B}_t\mathbf{K}^j_t)(\mathbf{T}^j_to_t+\mathbf{C}^j_t)\\
      &-G^x_{k|t,j}S^x_k(\mathbf{A}_t\Delta x_{0|t}+\mathbf{B}_t\mathbf{h}^j_t)\Big)\leq\epsilon\\
      \Rightarrow& G^x_{k|t,j}S^x_k(\mathbf{A}_t\Delta x_{0|t}+\mathbf{B}_t\mathbf{h}^j_t)+(G^o_{k|t,j}S^o_k+G^x_{k|t,j}S^x_k\mathbf{B}_t\mathbf{K}^j_t)(\mathbf{T}^j_to_t+\mathbf{C}^j_t)\\
      +&\Vert\left[G^x_{k|t,j}S^x_k(\mathbf{E}_t+\mathbf{B}_t\mathbf{M}^j_t)\ (G^o_{k|t,j}S^o_k+G^x_{k|t,j}S^x_k\mathbf{B}_t\mathbf{K}^j_t)\mathbf{L}^j_t\right]\boldsymbol{\Sigma}^{j\frac{1}{2}} \Vert_2\boldsymbol{\Phi}^{-1}(\epsilon)\\
      &\geq\tilde{g}_{k|t,j}
  \end{align*}
  \normalsize
  where $\boldsymbol{\Phi}^{-1}(\cdot)$ is the inverse CDF of $\mathcal{N}(0,1)$,\footnotesize$\ 
  \boldsymbol{\Sigma}^j=\text{blkdiag}(\boldsymbol{\Sigma}_w, \boldsymbol{\Sigma}^j_n)
  $\normalsize\ using \eqref{mat:Sigma}. Since $\epsilon<\frac{1}{2}$, we have $\boldsymbol{\Phi}^{-1}(\epsilon)<0$ and the above inequality can be shown to be of the form $\Vert Ax+b\Vert_2\leq c^\top x+d$, a Second-Order Cone (SOC) constraint in $x=(\mathbf{h}^j_t, \mathbf{M}^j_t, \mathbf{K}^j_t)$. The intersection of these constraints $\forall j\in\mathcal{I}_1^J, \forall k\in\mathcal{I}_1^N$ is also a SOC set in $\{\mathbf{h}^j_t,\mathbf{M}^j_t,\mathbf{K}^j_t\}_{j=1}^J$.
  Similarly the affine state and input chance constraints \eqref{eq:EV_cc} can be reformulated into SOC constraints, giving us the desired result. $\hfill\blacksquare$
    \subsection{Proof of proposition~\ref{prop:4}}\label{app:prop4proof}
    We have to show that \small$\min\{J_t(\mathbf{x}_t,\mathbf{u}_t)\ :\{\mathbf{h}^j_t,\mathbf{M}^j_t,\mathbf{K}^j_t\}_{j=1}^J\in\Theta_t(x_t,o_t)\}$ \normalsize is equivalent to \eqref{opt:SMPC} and is a SOCP. We have already shown that $\{\mathbf{h}^j_t,\mathbf{M}^j_t,\mathbf{K}^j_t\}_{j=1}^J\in\Theta_t(x_t,o_t)$ is a SOC constraint in the previous proposition. It remains to show that the cost can be reformulated appropriately. For this, we transform the problem into its epigraph form by introducing a additional scalar decision variable $s$, to get \small$\min\{s:J(x_t,o_t)\leq s, \{\mathbf{h}^j_t,\mathbf{M}^j_t,\mathbf{K}^j_t\}_{j=1}^J\in\Pi(x_t,o_t) \}$\normalsize. Then,
    \footnotesize
    \begingroup
    \allowdisplaybreaks
    \begin{align*}
        &s\geq J_t(\mathbf{x}_t,\mathbf{u}_t)\\
        &=\sum\limits_{j=1}^Jp_{t,j}\mathbb{E}\left(\sum\limits_{k=0}^{N-1}\Delta x_{k+1|t}^\top Q\Delta x_{k+1|t} + \Delta u_{k|t}^\top R\Delta u_{k|t}| \sigma_t=j\right)\\
        &=-\Delta x^\top_{0|t}Q\Delta x_{0|t}+\sum_{j=1}^Jp_{t,j}\mathbb{E}(\boldsymbol{\Delta x}^\top_t\mathbf{Q}\boldsymbol{\Delta x}_t+\boldsymbol{\Delta u}^{j\top}_t\mathbf{R}\boldsymbol{\Delta u}^j_t| \sigma_t=j)\\
        &=\sum_{j=1}^Jp_{t,j}((\mathbf{h}_t^j+\mathbf{K}^j_t(\mathbf{T}_t^jo_t+\mathbf{C}_t^j))^\top(\mathbf{B}^\top_t\mathbf{Q}\mathbf{B}_t+\mathbf{R})(\mathbf{h}_t^j+\mathbf{K}^j_t(\mathbf{T}_t^jo_t+\mathbf{C}_t^j))\\
		&+\text{tr}\big(\boldsymbol{\Sigma}_w\mathbf{M}_t^{j\top}(\mathbf{B}_t^\top\mathbf{Q}\mathbf{B}_t+\mathbf{R})\mathbf{M}_t^j+\mathbf{L}^{j}_t\boldsymbol{\Sigma}_n\mathbf{L}^{j\top}_t\mathbf{K}_t^{j\top}(\mathbf{B}_t^\top\mathbf{Q}\mathbf{B}_t+\mathbf{R})\mathbf{K}_t^j\\
		&+2\Delta x^\top_{0|t}\mathbf{A}^\top_t\mathbf{Q}\mathbf{B}_t(\mathbf{h}_t^j+\mathbf{K}^j_t(\mathbf{T}_t^jo_t+\mathbf{C}_t^j))\big))\\
		&+\underbrace{\Delta x^\top_{0|t}(\mathbf{A}^\top_t\mathbf{Q}\mathbf{A}_t-Q)\Delta x_{0|t}+\text{tr}(\boldsymbol{\Sigma}_w\mathbf{E}^\top_t\mathbf{Q}\mathbf{E}_t)}_{-r_t}
    \end{align*}
    \endgroup
    \normalsize
    and the first three terms correspond to $N_t(\cdot)$, which is quadratic and convex because $Q\succ 0, R\succ 0$. Thus this inequality is a convex quadratic constraint in $(s,\{\mathbf{h}^j_t,\mathbf{M}^j_t,\mathbf{K}^j_t\}_{j=1}^J)$, which is a special case of SOC. Since the cost $s$ is linear, we have that \eqref{opt:SMPC} is a SOCP. $\hfill\blacksquare$

\end{document}